\mathchardef\mhyphen="2D
\newcommand{\DenseSub}{{\sf Densest Subgraph}}
\newcommand{\kDenseSub}{{\sf k-Densest Subgraphs}}
\newtheorem{problem}{Problem}
\title{Computing the $k$ Densest Subgraphs of a Graph}
\author{Riccardo  Dondi}{Universit\`a degli Studi di Bergamo, Bergamo, Italy }{riccardo.dondi@unibg.it}{}{}
\author{Danny Hermelin}{Ben-Gurion University of the Negev, Be'er Sheva, Israel}{hermelin@bgu.ac.il}{}{}
\authorrunning{ } 
\keywords{Algorithm Design, Network Mining and Analysis, Densest Subgraph,
Algorithmic Aspects of Networks.} 
\begin{document}

\maketitle

\begin{abstract}
Computing cohesive subgraphs is a central problem in graph theory. While many formulations of cohesive subgraphs lead to NP-hard problems, finding a densest subgraph can be done in polynomial-time. As such, the densest subgraph model has emerged as the most popular notion of cohesiveness. Recently, the data mining community has started looking into the problem of computing $k$ densest subgraphs in a given graph, rather than one.
In this paper we consider a 
natural variant of the $k$ densest subgraphs problem, 
where overlap between solution subgraphs is allowed with no constraint. We show that the problem is fixed-parameter tractable with respect to $k$, and admits a PTAS for constant $k$. Both these algorithms complement nicely the previously known $O(n^k)$ algorithm for the problem. 
\end{abstract}



\section{Introduction}
\label{sec:Introduction}
Finding cohesive subgraphs is a central problem in the analysis of social networks~\cite{DBLP:journals/cn/KumarRRT99}, graph-mining~\cite{DBLP:conf/kdd/SozioG10,DBLP:conf/www/TattiG15,DBLP:conf/www/Tsourakakis15a,DBLP:journals/tkdd/Tatti19},
group dynamics research~\cite{DyaramKamalanabhan}, computational biology~\cite{bioinfo2006}, and many other areas. The most basic and natural attempt at modeling cohesiveness is via the notion of cliques; however, this notion is too strict and rigid for most applications, and is also known to be computationally hard~\cite{DBLP:conf/coco/Karp72,DBLP:journals/toc/Zuckerman07}.

While there are several alternative definitions for cohesiveness~\cite{Komusiewicz16}, a notion that has emerged as arguably the most popular is the {\em densest subgraph} model~\cite{DBLP:conf/waw/AndersenC09,DBLP:conf/wsdm/BalalauBCGS15,DBLP:journals/datamine/GalbrunGT16,DBLP:conf/cikm/NasirGMG17,DBLP:conf/kdd/SozioG10,DBLP:conf/www/TattiG15,Zou2013}. Here, the \emph{density} of a graph is simply the edge-to-vertex ratio in the graph, and the densest subgraph is the (induced) subgraph that maximizes this ratio. As opposed to the maximum clique, finding a densest subgraph in a graph is polynomial-time solvable~\cite{DBLP:journals/siamcomp/GalloGT89,Goldberg:1984:FMD:894477,DBLP:journals/algorithmica/KawaseM18,PicardQ82}. This fact, along with the naturality of the concept, has lead the notion of density to nowadays be considered at the core of large
scale data mining~\cite{DBLP:journals/pvldb/BahmaniKV12}.

Recent contributions have shifted the interest from
computing a single cohesive subgraph to computing 
a set of such subgraphs~\cite{DBLP:conf/wsdm/BalalauBCGS15,DBLP:journals/datamine/GalbrunGT16,DBLP:conf/cikm/NasirGMG17,ValariKP12}, as this is naturally more desirable in most applications. The proposed approaches may allow (but not force) the subgraphs to overlap, as many real-world cohesive groups share common elements. For example, hubs may belong to more than one community~\cite{DBLP:journals/im/LeskovecLDM09,DBLP:journals/datamine/GalbrunGT16}. The way the overlap is restricted, if at all, varies among the different approaches. For instance, in~\cite{DBLP:conf/wsdm/BalalauBCGS15}, the notion of overlap is restricted via a constraint on the pairwise Jaccard coefficient of the subgraphs of the solution, while in~\cite{DBLP:journals/datamine/GalbrunGT16} the total overlap is factored into the objective function.

\subsection{A natural variant}

In this paper we consider a variant of the problem of computing $k$ densest subgraphs of a given graph,
where  subgraphs in the solution must be distinct (\emph{i.e.} have different vertex sets). Thus a solution subgraph may be a subgraph, a supergraph, or have almost the same vertex set as another solution subgraph. 
The objective function is the maximization
of the total sum of densities of the solution subgraphs. 

\begin{problem}{\kDenseSub} \\
\label{prob:distinct}%
\noindent
\emph{Input:} A graph $G$. \\
\emph{Output:} A set of $k$ pairwise distinct subgraphs $G_1,\ldots,G_k$ of $G$.\\
\emph{Objective:} Maximize $\sum_{i=1}^k density(G_i)$.
\end{problem}
While \kDenseSub{} is arguably the most basic variant for the problem of computing the $k$ densest subgraphs of a given graph, very little is known about the problem from a theoretical perspective. In~\cite{TOPK2019}, it is shown that this problem is solvable in $n^{O(k)}$ time. This is the main yardstick by which we assess the results in this paper.
\begin{theorem}[\cite{TOPK2019}]
\label{thm:previous}%
\kDenseSub{} can be solved in $n^{O(k)}$ time.
\end{theorem}



Our first result shows that there is a a rather efficient algorithm (for constant values of $k$), if one is willing to slightly compromise the quality of the solution. In particular, we show that the problem admits an efficient PTAS (EPTAS): 
\begin{theorem}
\label{thm:distinct}%
For any fixed $k, \varepsilon \geq 1$, there is an algorithm that computes in $O(m n \log n)$ time a $(1-\frac{1}{\varepsilon})$-approximate solution for \kDenseSub{}. \end{theorem}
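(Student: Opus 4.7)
\medskip

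\noindent\textbf{Proof proposal.}
My plan is to reduce the problem to a single exact densest-subgraph computation and to derive the remaining $k-1$ subgraphs from it by greedy peeling. The first step is to run a max-flow based algorithm (e.g., Goldberg's) to obtain an exact densest subgraph $G^* = (V^*, E^*)$ of density $d^* = |E^*|/|V^*|$ in time $O(mn\log n)$. The useful observation is that every subgraph of $G$ has density at most $d^*$, so any optimal solution satisfies $\mathrm{OPT} \leq k\,d^*$; it therefore suffices to produce $k$ pairwise distinct subgraphs whose densities sum to at least $(1-1/\varepsilon)\,k\,d^*$.

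The main step is then a greedy peeling of $G^*$. Set $H_0 := G^*$ and, for $i = 1,\ldots,k-1$, let $v_i$ be a minimum-degree vertex of $H_{i-1}$ and $H_i := H_{i-1}\setminus\{v_i\}$. These subgraphs have different vertex counts and so are pairwise distinct. Using the elementary fact that the minimum degree of any graph is at most twice its density, a one-line calculation gives the recursion
\[
d(H_i) \;\geq\; d(H_{i-1})\cdot \frac{|V(H_{i-1})|-2}{|V(H_{i-1})|-1},
\]
which telescopes to $d(H_i) \geq d^*\,(n^* - i - 1)/(n^* - 1)$, where $n^* = |V^*|$. Summing yields $\sum_{i=0}^{k-1} d(H_i) \geq k\,d^*\bigl(1 - (k-1)/(2(n^*-1))\bigr)$, which is at least $k\,d^*(1-1/\varepsilon) \geq (1-1/\varepsilon)\,\mathrm{OPT}$ as soon as $n^* \geq 1 + \varepsilon(k-1)/2$. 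The peeling itself runs in $O(|V^*|\log|V^*| + |E^*|)$ using a min-heap over the current degrees, so the whole routine stays inside the $O(mn\log n)$ budget.

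The main obstacle is the small regime $n^* < 1 + \varepsilon(k-1)/2$, in which $G^*$ does not contain enough vertices to peel $k$ subgraphs with the desired density guarantee. Since $k$ and $\varepsilon$ are fixed, $n^*$ is then bounded by an absolute constant, but the case is genuinely delicate because the top-$k$ subgraphs of $G$ need not sit inside or even near $V^*$. To address this, I would enrich the candidate set with greedy augmentations $G^* \cup \{u_1\}, G^* \cup \{u_1,u_2\}, \ldots$, where each $u_j \notin V^*$ is picked to maximize its number of edges into the current subgraph; optimality of $G^*$ forces $|N(u)\cap V^*|\leq d^*$ for every $u\notin V^*$, so these augmentations retain densities comparable to $d^*$. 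The plan is then to output the top $k$ densities from the union of the peeled chain and the augmentation chain. The technical heart of the proof is to show that this enriched set always admits $k$ distinct subgraphs whose density sum meets the $(1-1/\varepsilon)$ bound, and that the whole augmentation step can be carried out in $O(mn\log n)$ time.
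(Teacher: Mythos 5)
Your large-$n^*$ argument is sound, and it is in fact a clean simplification of part of the paper's proof: the bound $\mathrm{OPT}\le k\,d^*$ combined with the peeling recursion $d(H_i)\ge d(H_{i-1})\cdot\frac{n_{i-1}-2}{n_{i-1}-1}$ correctly disposes of the case $n^*\ge 1+\varepsilon(k-1)/2$. The genuine gap is exactly where you flag it: the small regime, which you defer (``the technical heart of the proof is to show\dots''). Unfortunately, the deferred claim is false as stated: the union of your peeling chain and your greedy augmentation chain does not always contain $k$ distinct subgraphs meeting the bound. Concretely, take $k=2$, $\varepsilon=100$, and let $G$ consist of a clique $A$ on $10$ vertices, a disjoint copy $B$ of a $10$-clique minus one edge, and $N$ pairwise non-adjacent ``decoy'' vertices, each adjacent to exactly two vertices of $A$ and to nothing else. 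The unique densest subgraph is $A$ (density $4.5$, so $n^*=10$, squarely in your small regime), and the optimum for \kDenseSub{} with $k=2$ is $A$ together with $A\cup B$, of total density $4.5+\frac{89}{20}=8.95$. Your peeling chain never exceeds density $4$ after $A$, and your greedy augmentation picks a decoy first (two edges into $A$ beats the zero edges of any vertex of $B$), then only more decoys, so the maximum density on the augmentation chain is $\frac{47}{11}<4.28$; the best two candidates thus sum to at most $4.5+\frac{47}{11}<8.78$, below the required $0.99\cdot 8.95=8.8605$. The conceptual reason is that in the small regime $n^*$ is a constant, so perturbing $G^*$ by even one vertex can lose a multiplicative factor of roughly $\frac{n^*}{n^*+1}$, which is bounded away from $1$ and cannot be pushed above $1-\frac{1}{\varepsilon}$ once $\varepsilon$ is large relative to $n^*$.

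This is precisely why the paper's algorithm does something qualitatively different while the computed subgraphs are small: it computes the next subgraph \emph{exactly}, as the densest over two candidate families, namely (i) for every transversal $v_1\in V_1,\ldots,v_i\in V_i$ (only $O(1)$ many choices, since each small subgraph has at most $\varepsilon-1$ vertices), a densest subgraph of $G[V\setminus\{v_1,\ldots,v_i\}]$ computed by Goldberg's algorithm, and (ii) for every $j\le i$, a densest \emph{strict supergraph} of $G_j$, computed exactly in $O(mn\log n)$ time by running Goldberg's algorithm on a suitable vertex-weighted graph (Lemma~\ref{lem:supergraph}). Any optimal $(i+1)$-st subgraph either contains some $V_j$ or misses at least one vertex of each $V_j$, so these candidates cover it and one gets $density(G_{i+1})=d^*_{i+1}$ exactly (Lemma~\ref{lem:small}); in the example above, candidate (ii) recovers $A\cup B$. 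Approximate, perturbation-style arguments (adding a single vertex, or peeling) are invoked by the paper only after a big or huge subgraph appears, where the per-vertex loss really is at least $1-\frac{1}{\varepsilon}$. To repair your proof you would have to replace the greedy augmentation by these exact computations, i.e., essentially adopt the paper's construction for the small regime, while your $\mathrm{OPT}\le k\,d^*$ peeling argument can be kept for the large regime.
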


Our second result shows that \kDenseSub{} is in fact fixed-parameter tractable when parameterized by the number $k$ of subgraphs. In particular, our second algorithm shows that the problem is polynomial-time solvable even for $k=\Theta(\lg n)$. More precisely, we prove the following:
\begin{theorem}
\label{thm:distinctFPT}%
\kDenseSub{} can be solved in $O(2^k m n^3 \log n)$ time. 
\end{theorem}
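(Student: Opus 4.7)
The plan is to prove the theorem by establishing the optimality of a greedy algorithm and then implementing each greedy step via a branching subroutine on constrained densest subgraph computations.

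\textbf{Greedy optimality.} Define $D_1, D_2, \ldots, D_k$ recursively: let $D_1$ be any densest subgraph of $G$, and for $i > 1$ let $D_i$ be any densest subgraph of $G$ distinct from $D_1, \ldots, D_{i-1}$. I claim $\{D_1, \ldots, D_k\}$ is an optimal solution, which I would show by an exchange argument. Given any optimal solution $\{G_1, \ldots, G_k\}$ sorted so that $d(G_1) \geq \cdots \geq d(G_k)$, I would prove $d(D_i) \geq d(G_i)$ by induction on $i$. For the inductive step, among the $i$ pairwise distinct subgraphs $G_1, \ldots, G_i$, the pigeonhole principle guarantees some $G_j$ lies outside $\{D_1, \ldots, D_{i-1}\}$, hence is a valid candidate for $D_i$, giving $d(D_i) \geq d(G_j) \geq d(G_i)$. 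Summing over $i$ yields $\sum d(D_i) \geq \sum d(G_i) = \text{OPT}$.

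\textbf{Implementing the greedy step.} Given already-found $\mathcal{D} = \{D_1, \ldots, D_{i-1}\}$, I would find the densest subgraph $S \notin \mathcal{D}$ as follows. Any such $S$ admits, for each $D_j \in \mathcal{D}$, a witness $v_j \in S \triangle D_j$; either $v_j \in D_j$ (must be excluded from $S$) or $v_j \in V \setminus D_j$ (must be included in $S$). Branching on this binary direction per $j$ yields $2^{i-1}$ sign branches. Within each sign branch, I would enumerate a concrete witness vertex per constraint and solve the resulting include/exclude-constrained densest subgraph problem via Goldberg's parametric max-flow algorithm (whose runtime is $O(mn \log n)$, by adding a dummy source-adjacent vertex dominating the forced-in set and deleting the forced-out set). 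Using the laminar chain structure of parametric max-flow level-sets, which produces only $O(n)$ distinct candidate densest subgraphs as the constraint varies, the witness enumeration per sign branch amortizes to $O(n^2)$ constrained densest subgraph computations.

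\textbf{Total time and main obstacle.} Summing over the $k$ greedy iterations yields $\sum_{i=1}^{k} 2^{i-1} \cdot O(n^2) \cdot O(mn \log n) = O(2^k m n^3 \log n)$, matching the claim. The main obstacle is the $O(n^2)$ amortized bound on witness enumeration within each sign branch: a direct enumeration that picks one witness per $D_j$ independently incurs $n^{i-1}$ sub-branches and hence an $n^{\Theta(k)}$ blow-up. Reducing this to $O(n^2)$ requires carefully exploiting the parametric max-flow decomposition to show that most witness combinations collapse into identical constrained densest subgraph computations, keeping the $k$-dependence exponential but the $n$-dependence polynomial. This collapse argument is the technically delicate step; the rest of the proof is routine once it is established.
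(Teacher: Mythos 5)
Your greedy-optimality argument and the $2^{i-1}$ sign-branching are fine, but the proposal has a genuine gap at exactly the step you yourself flag as ``technically delicate'': the claimed collapse of the witness enumeration from $n^{\Theta(k)}$ down to $O(n^2)$ constrained computations per sign branch. This is not a routine detail that can be deferred --- it is the entire content of the theorem (without it you only recover the trivial $O(n^k)$ enumeration already known), and the appeal to the laminar chain structure of parametric max-flow level sets does not supply it. That chain structure concerns the nested level sets of a \emph{single} parametric flow instance as the density parameter varies; your witness choices, by contrast, change the instance itself (different forced-in vertices and different deleted vertices yield different graphs and weight functions), so there is no single parametric family whose $O(n)$ breakpoints enumerate the candidates you need. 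No argument is given --- and none is apparent --- that the $\prod_j |S \triangle D_j|$-many witness combinations inside one sign branch induce only $O(n^2)$ distinct constrained densest-subgraph problems.

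The paper closes precisely this hole with a different combinatorial idea that eliminates the need for one witness per previous subgraph. If the new subgraph $V_{\ell+1}$ contains some vertex $v$ outside $V^*=\bigcup_{i\le\ell} V_i$, then forcing that single vertex in already certifies distinctness from \emph{all} previous subgraphs simultaneously, so $O(n)$ weighted Goldberg calls suffice (Lemmas~\ref{lem:includes_v} and~\ref{lem:case1}). Otherwise $V_{\ell+1}\subseteq V^*$, and the paper guesses a \emph{minimal cover} $\mathcal{C}\subseteq\{V_1,\ldots,V_\ell\}$ of $V_{\ell+1}$ ($2^\ell$ choices) and proves (Lemma~\ref{lem:vinvout}) that a single pair $v_{in}\in V_{\ell+1}$, $v_{out}\in V_\mathcal{C}\setminus V_{\ell+1}$ distinguishes $V_{\ell+1}$ from every member of $\mathcal{C}$ at once; running Goldberg on $G[V_\mathcal{C}\setminus\{v_{out}\}]$ with $v_{in}$ forced in then requires only $O(n^2)$ pairs per subset, giving the $O(2^k m n^3 \log n)$ bound (Lemma~\ref{lem:case2}). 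To make your proof work you would need either to actually prove your collapse claim or to replace it with an idea of this kind; as written, the proposal does not establish the theorem.
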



\subsection{Related work}

The \DenseSub{} problem, the problem of computing a densest subgraph in a given graph, is the special case of 
\kDenseSub{} when $k=1$. This problem has been extensively studied in the literature, and we outline here only the main results. The problem is known to be polynomial-time solvable~\cite{Goldberg:1984:FMD:894477,PicardQ82,DBLP:journals/siamcomp/GalloGT89,DBLP:journals/algorithmica/KawaseM18}, and it can be approximated within a factor of $\frac{1}{2}$
in linear time~\cite{DBLP:journals/jal/KortsarzP94,DBLP:conf/swat/AsahiroITT96,DBLP:conf/approx/Charikar00}. Generalization of the problem to weighted graphs~\cite{Goldberg:1984:FMD:894477}, as well as directed graphs~\cite{DBLP:conf/icalp/KhullerS09}, also turn out to be polynomial-time solvable. However, the \DenseSub{} problem becomes NP-hard when constraints on the number of vertices in the output graph are added~\cite{DBLP:conf/waw/AndersenC09,DBLP:journals/dam/AsahiroHI02,DBLP:journals/algorithmica/FeigePK01,DBLP:journals/corr/abs-0912-5327,DBLP:conf/icalp/KhullerS09,DBLP:conf/stoc/Manurangsi17}. 




\section{Preliminaries}
\label{sec:Pre}%

All graphs considered in this paper are simple, undirected, and without self-loops. Throughout the paper we let $G=(V,E)$ denote an input graph, and we let $n=|V|$ and $m=|E|$. For a vertex $v \in V$, we let $deg(v)$ denote the \emph{degree} of $v$ in $G$, \emph{i.e.} $deg(v) = |\{u \in V: \{u,v\} \in E\}|$. The density of $G$ is defined by  $density(G)=m/n$, and in general, the density of a graph is the ratio between the number of edges and the number of vertices in the graph. 

Given a subset of vertices $V_1 \subseteq V$, we denote by $G[V_1]$ the \emph{subgraph} of $G$ induced by $V_1$; formally, $G[V_1]=(V_1,E_1)$ where $E_1= \{\{ u,v \} \in E: u,v \in V_1\}$. Thus, a subgraph of $G$ is determined completely by its subset of vertices. If $G[V_1]$ and $G[V_2]$ are both subgraphs of $G$, then we say that these subgraphs are \emph{distinct} whenever $V_1 \neq V_2$. If $V_1 \cap V_2 = \emptyset$ then the two subgraphs are \emph{disjoint}, and if $V_1 \subset V_2$, then $G[V_2]$ is a \emph{proper supergraph} of $G[V_1]$.


\subsection{Goldberg's algorithm}
\label{sec:Goldberg}%

As mentioned above, the \DenseSub{} problem can be solved in polynomial-time~\cite{Goldberg:1984:FMD:894477,PicardQ82,DBLP:journals/siamcomp/GalloGT89}. The main idea is to reduce the problem to a series of min-cut computations. Picard and Queyranne's algorithm~\cite{PicardQ82} requires $O(n)$ such computations, where $n$ is the number of vertices in the input graph, while Goldberg's algorithm~\cite{Goldberg:1984:FMD:894477} improves this to $O(\log n)$, thus
giving an overall time complexity of $O(m n \log n)$ via Orlin's algorithm~\cite{Orlin13}. 
 Recently, the time complexity of Goldberg's algorithm for 
 unweighted graphs has been improved to $O(n^3)$ \cite{DBLP:journals/algorithmica/KawaseM18}.
Goldberg also showed that one can compute in $O(m n \log n)$ time a densest subgraph in a \emph{vertex-weighted} graph; here, the density of a vertex-weight graph $H$ on $n$ vertices of total weight $w$ and $m$ edges is given by $density(H)=(m+w)/n$.  


\section{An EPTAS for \kDenseSub{}}
\label{sec:PTAS}

In the following section we describe our EPTAS for \kDenseSub{}. Let $(G,k)$ denote a given instance of \kDenseSub{}, and let $\varepsilon > 0$ be a given constant. Our goal is to compute in $O(m n \log n)$ time $k$ distinct subgraphs $G_1,\ldots,G_k$ of $G$ with densities $d_1,\ldots,d_k$ such that $\sum_i d_i \geq (1-\frac{1}{\varepsilon}) \cdot OPT$, where $OPT$ is the
value of an solution of \kDenseSub{},
that is the
total sum of densities of the $k$ densest subgraphs in $G$. 
Recall that $k=O(1)$.

Below we first provide a description of our algorithm, followed by an analysis of its running time, and an analysis of its approximation ratio guarantee. Since the function $(\frac{n-2k}{n})^k$ tends to 1 as $n$ grows to infinity, we will henceforth assume that $n$ is sufficiently large so that the following inequality holds
(otherwise we can solve the problem optimally via brute force in $O(1)$ time):

\vspace{-.6cm}
\begin{equation}
\label{eqn:sufficientlylarge}
\left(\frac{n-2k}{n}\right)^k \geq \left(1-\frac{1}{\varepsilon}\right).
\end{equation}

\vspace{-.4cm}
\subsection{The algorithm}

We say that a subgraph $G_i=(V_i,E_i)$ of $G$ is \emph{small} if  $|V_i| \leq \varepsilon -1$. Our algorithm proceeds in a certain way so long that all subgraphs computed so far are small; once a subgraph which is not small is computed, the algorithm proceeds in a different manner. The first subgraph $G_1=(V_1,E_1)$ is computed using Goldberg's algorithm, so $G_1$ is a densest subgraph in $G$.

Suppose that we have computed subgraphs $G_1,\ldots,G_i$ for some $1 \leq i \leq k-1$, and all these subgraphs are small. The subgraph $G_{i+1}$ is taken to be a densest graph out of all of the following possible candidates:
\begin{itemize}
    \item A densest subgraph in $G[V \setminus \{v_1,\ldots,v_i\}]$ for some $v_1 \in V_1,\ldots,v_i \in V_i$.
    \item A densest strict supergraph of $G_j$ in $G$ for some $j \in \{1,\ldots,i\}$. 
\end{itemize}
Note that some of the candidates of the second type above can be graphs in $\{G_1,\ldots,G_i\}$; such graphs are naturally excluded from being candidates for the subgraph $G_{i+1}$.

Suppose that we have computed subgraphs $G_1,\ldots,G_i$ for some $1 \leq i \leq k-1$, and $G_i=(V_i,E_i)$ is not small. Then in this case $G_i$ can either be big or huge. We say that $G_i$ is \emph{big} if $|V_i| \leq n-k-i$, and otherwise it is \emph{huge}. If $G_i$ is big, we choose  arbitrary distinct vertices $v_{i+1},\ldots,v_k \in V \setminus V_i$ and set $G_j$ to be the graph induced by $V_i \cup \{v_j\}$ for $j\in\{i+1,\ldots,k\}$. Note that since $V_i$ is not huge, there are enough distinct vertices in $V \setminus V_i$. Also note that as $G_i$ is the only big subgraph in $G_1,\ldots,G_i$, it is not a proper subgraph of any of these graphs and so all subgraphs $G_j$ are distinct from all subgraphs computed so far. 

If $G_i$ is huge, then the graphs $G_{i+1},\ldots,G_k$ are computed by iteratively removing minimal degree vertices in $G_i$. Since $G_i$ is huge and all graphs $G_1,\ldots,G_{i-1}$ are small, we are guaranteed that subgraphs computed in this way are distinct from those we have computed. 
\subsection{Run-time analysis}

Before analyzing the run-time of our algorithm, we begin with the following lemma: 

\begin{lemma}
\label{lem:supergraph}%
Let $H_0$ be a strict subgraph of $G$, and let $H$ be a densest strict supergraph of $H_0$ in $G$. If $density(H) \leq density(H_0)$, then there is an algorithm that computes in $O(mn \log n)$ time a strict supergraph of $H_0$ in $G$ with density equal to $density(H)$, given $H_0$ as input.
\end{lemma}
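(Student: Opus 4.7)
Let $V_0 = V(H_0)$, $n_0 = |V_0|$, and $m_0 = |E(H_0)|$; any strict supergraph of $H_0$ in $G$ has the form $G[V_0 \cup V']$ for a nonempty $V' \subseteq V(G)\setminus V_0$, and its density is
\[
\rho(V') \;=\; \frac{m_0 + e_0(V') + e(V')}{n_0 + |V'|},
\]
where $e_0(V') = |E_G(V_0,V')|$ counts edges between $V_0$ and $V'$ and $e(V') = |E_G[V']|$ counts edges inside $V'$. The plan is to recast the optimisation of $\rho$ as a parametric min-cut computation in the style of Goldberg's algorithm for the vertex-weighted densest subgraph and inherit its $O(mn\log n)$ running time.

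Fix a density threshold $g$. The condition $\rho(V')\ge g$ rewrites as
\[
e(V') \;+\; \sum_{v \in V'}\bigl(w(v) - g\bigr) \;\geq\; g\,n_0 - m_0, \qquad w(v) \,:=\, |N_G(v)\cap V_0|.
\]
The left-hand side is exactly the Goldberg-style density objective on the vertex-weighted graph $G^{\ast}$ whose vertex set is $V(G)\setminus V_0$, whose edges are those of $G$ with both endpoints outside $V_0$, and whose vertex weights are $w(\cdot)$. Maximising it over $V' \subseteq V(G)\setminus V_0$ reduces to a single minimum $s$--$t$ cut in a flow network of $O(n+m)$ size, which costs $O(mn)$ by Orlin's max-flow algorithm.

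Wrapping a binary search over $g$ around this decision procedure pins down the optimum density $g^{\ast}$. Since $g^{\ast}$ is a rational $p/q$ with $q\le n$, consecutive feasible values differ by at least $1/n^2$, so $O(\log n)$ rounds are enough (alternatively, Goldberg's parametric scheme achieves the same bound directly). This yields the claimed total running time of $O(mn\log n)$. The hypothesis $density(H)\le density(H_0)$ confines the binary search to $g\in[0,m_0/n_0]$ and, crucially, guarantees that the densest strict supergraph is attained inside this range.

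The main obstacle is enforcing the strictness requirement $V'\neq \emptyset$: in the regime $g\le m_0/n_0$ the empty set already satisfies the threshold inequality (with both sides equal to $0$ when $g=m_0/n_0$), so a naive min cut might return the trivial ``solution'' $V'=\emptyset$ corresponding to $H_0$ itself. I would handle this through the standard lattice-of-min-cuts argument: at the final threshold $g^{\ast}$ the collection of minimum cuts forms a distributive lattice, and its unique maximum element has a nonempty source-side whenever a nonempty optimiser exists, which the hypothesis ensures. Extracting that maximum cut is a single reachability computation in the residual graph after the final max-flow, so it preserves the $O(mn\log n)$ bound; this tie-breaking is the only delicate step, while the rest of the proof is a direct application of Goldberg's framework.
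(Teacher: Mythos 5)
Your construction coincides with the paper's (the vertex-weighted graph $G^{*}$ on $V\setminus V_0$ with $w(v)=|N_G(v)\cap V_0|$), but your parametric/binary-search argument has a genuine gap, and it is not merely the tie-breaking issue you flag at the end. Write $f_g(V') = e(V')+\sum_{v\in V'}(w(v)-g)$. A single min cut at threshold $g$ evaluates $\max_{V'} f_g(V')$ over \emph{all} $V'$, including $V'=\emptyset$ with $f_g(\emptyset)=0$. On the entire interval to which the hypothesis confines your search, namely $g\le m_0/n_0$, the right-hand side $gn_0-m_0$ is nonpositive, so the threshold test is passed by the empty set at \emph{every} $g$: your decision procedure answers ``yes'' identically on $[0,m_0/n_0]$, and the binary search converges to $m_0/n_0$ rather than to $g^{*}=\max_{V'\neq\emptyset}\rho(V')$, which is in general strictly smaller. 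The lattice-of-min-cuts extraction does not repair this: the maximal min cut has a nonempty source side at threshold $g$ if and only if some nonempty $V'$ satisfies $f_g(V')\ge 0$, i.e.\ if and only if the \emph{weighted density} $d(V')=\bigl(e(V')+\sum_{v\in V'}w(v)\bigr)/|V'|$ is at least $g$ --- a statement about $d$, not about $\rho$. For every $g$ strictly between $d_1:=\max_{V'\neq\emptyset}d(V')$ and $m_0/n_0$, the empty set is the \emph{unique} optimizer of $f_g$, the maximal min cut is empty, and your final extraction returns $H_0$ itself, which is not a strict supergraph. The hypothesis $density(H)\le density(H_0)$ does not ``ensure a nonempty optimiser exists'' at these thresholds; it guarantees you are searching precisely in the regime where none does.

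What is missing is exactly the paper's key step. The min-cut machinery you invoke can, within $O(mn\log n)$ time, maximize the offset-free objective $d(V')$ over nonempty $V'$ --- this is one run of Goldberg's weighted densest-subgraph algorithm on $G^{*}$, which is all the paper uses. The substantive claim requiring proof is that, \emph{under the hypothesis}, a maximizer $V_1$ of $d$ also maximizes the shifted (mediant) objective $\rho(V')=\bigl(m_0+|V'|\,d(V')\bigr)/\bigl(n_0+|V'|\bigr)$; these two objectives are genuinely different and their maximizers need not coincide in general. The paper establishes this equivalence algebraically: from the hypothesis it first derives $density(H_0)\ge d_1$ (Claim~\ref{claim: EPTAS}), and then shows by cross-multiplication that $G[V_0\cup V_1]$ is at least as dense as $G[V_0\cup V_2]$ for every nonempty $V_2$. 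In your write-up the hypothesis is used only to bound the search interval --- which is exactly the interval where your test is vacuous --- so the argument does not go through: with the value-comparison decision the algorithm is incorrect, and with the nonemptiness decision it happens to output the right set only because of the equivalence between maximizing $d$ and maximizing $\rho$, which is the very statement left unproven.
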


\begin{proof}
Given $H_0=(V_0,E_0)$ as input, the algorithm uses Goldberg's algorithm to compute a densest subgraph $H_1=(V_1,E_1)$ in the vertex-weighted graph $G^*=G[V\setminus V_0]$, with vertex weights defined by $w(v) = |N_G(v) \cap V_0|$ for each vertex $v$ of $G^*$. It then returns the graph $H = H_0 \cup H_1 = G[V_0 \cup V_1]$ as a solution. Clearly, this can be done in $O(mn \log n)$ time, and $H$ is a strict supergraph of $H_0$ in $G$. We claim that $H$ is indeed a densest among all supergraphs of $H_0$. 

Let $H'=(V',E')$ be any strict supergraph of $H_0$ ($V_0 \subset V'$), and let $H_2=(V_2,E_2)$ be the subgraph of $G$ induced by $V_2=V' \setminus V_0$. Our goal is to show that $H$ is at least as dense as $H'$ in $G$. Let $n_i = |V_i|$ and $m_i = |E_i| + \sum_{v \in H_i} w(v)$ for $i \in \{1,2\}$. Then the density of $H_1$ and $H_2$ in the vertex weighted graph $G^*$ is $d_1=m_1/n_1$ and $d_2=m_2/n_2$ respectively. Also, by letting $n_0 = |V_0|$ and $m_0 = |E_0|$, the density of $H_0$ in $G$ is given by $density(H_0)=d_0=m_0/n_0$. Furthermore, observe that by the definition of the vertex weight function in $G^*$, we have 
\begin{align*}
density(H) & = \frac{|E_0|+|E_1|+ |E(V_0,V_1)|}{|V_0|+|V_1|} 
= \frac{|E_0|+|E_1|+ \sum_{v \in V_1}|N(v) \cap V_0|}{|V_0|+|V_1|} =\\
& = \frac{|E_0|+|E_1| + \sum_{v \in V_1}w(v)}{|V_0|+|V_1|}
 = \frac{m_0+m_1}{n_0+n_1}, 
\end{align*}
and similarly, $density(H') = (m_0+m_2)/(n_0+n_2)$. Below we argue that $density(H)$ is at least as large as $density(H')$. 

By standard algebra, we have 
\begin{align*}
density(H) & \geq density(H') & \iff \\
\frac{m_0+m_1}{n_0+n_1} & \geq \frac{m_0+m_2}{n_0+n_2} & \iff\\
m_0n_2 + m_1(n_0 +n_2) & \geq m_0n_1 + m_2(n_0 + n_1)  & \iff\\
m_1n_2 + m_0(n_2-n_1) & \geq m_2n_1 + n_0(m_2-m_1). & 
\end{align*}
Thus, to complete the proof it suffices to prove the following two inequalities: $m_1n_2 \geq m_2n_1$ and $m_0(n_2-n_1) \geq n_0(m_2-m_1)$.

For the first inequality, observe that $d_1 = m_1/n_1 \geq d_2 = m_2/n_2$ as $H_1$ is a densest subgraph in $G^*$;  this directly implies $m_1n_2 \geq m_2n_1$.  For second inequality, by the assumption that $density(H_0) \geq density(H)$, we have:
\begin{align*}
density(H_0) & \geq density(H) & \iff \\
\frac{m_0}{n_0} & \geq \frac{m_0+m_1}{n_0+n_1} & \iff \\
m_0n_1 & \geq m_1n_0 & \iff \\
\frac{m_0}{n_0} & \geq \frac{m_1}{n_1} & \iff \\
density(H_0) & \geq d_1, & 
\end{align*}
Thus, 
$$
\frac{m_0}{n_0} = density(H_0) \geq d_1 = \frac{d_1(n_2-n_1)}{n_2-n_1} \geq \frac{d_2n_2-d_1n_1}{n_2-n_1} = \frac{m_2-m_1}{n_2-n_1},
$$
and so the second inequality also holds. 
\end{proof}

Now, first observe that $G_1$ is computed in $O(mn \log n)$ time
(or $O(n^3)$ time if
$m \log n > n^2$) with Goldberg's algorithm given in \cite{Goldberg:1984:FMD:894477,DBLP:journals/algorithmica/KawaseM18}. Next, note that if some subgraph $G_i$ is big or huge, then the remaining graphs $G_{i+1},\ldots,G_k$ can easily be computed in $O(m+n)$ time. Consider then a small subgraph $G_i$ for some $i \leq k-1$. Then, by construction, all subgraphs $G_1,\ldots,G_i$ are small, and so we have $|V_1| \cdots |V_i|\leq \varepsilon^k = O(1)$. The subgraph $G_{i+1}$ is computed by first computing candidates of two different types. For the first type we need to invoke Goldberg's algorithm on a graph $|V_1| \cdots |V_i| \leq  \varepsilon^k = O(1)$ times, so this requires $O(mn \log n)$ time (or $O(n^3)$  time if
$m \log n > n^2$). For the second type, we need to invoke Goldberg's algorithm 
on a weighted graph, as described
in Lemma~\ref{lem:supergraph} above, $i=O(1)$ times, and so this also requires $O(m n \log n)$ time. In total, we compute each subgraph $G_i$ in $O(mn \log n)$ time, which gives a the same run-time for the entire algorithm since $k=O(1)$.   

\subsection{Approximation-ratio analysis}

Let $G^*_1,\ldots,G^*_k$ be an optimal solution
of \DenseSub{} on instance $G$, with densities $d^*_1 \geq d^*_2 \geq \cdots \geq d^*_k$. We analyze the approximation ratio guaranteed by our algorithm by comparing the density of each subgraph $G_i=(V_i,E_i)$ computed by the algorithm with $d^*_i$. For $G_1$ this is easy. Since $G^*_1$ is a densest subgraph in $G$, and $G_1$ is the graph computed by Goldberg's algorithm, we have:
\begin{lemma}
\label{lem:first}%
$density(G_1) = d^*_1$.
\end{lemma}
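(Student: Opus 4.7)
The plan is to show both inequalities $density(G_1) \geq d^*_1$ and $density(G_1) \leq d^*_1$, each of which is essentially immediate from the fact that Goldberg's algorithm returns a densest subgraph of $G$.

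First I would establish $density(G_1) \geq d^*_1$. Since $G^*_1$ is a subgraph of $G$ with density $d^*_1$, the densest subgraph in $G$ has density at least $d^*_1$. But $G_1$ is computed by Goldberg's algorithm on $G$, so $G_1$ is a densest subgraph of $G$, and hence $density(G_1) \geq d^*_1$.

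For the reverse inequality, I would argue that $d^*_1$ must itself be equal to the density of a densest subgraph of $G$. Suppose for contradiction that $density(G_1) > d^*_1$; then $G_1 \notin \{G^*_1,\ldots,G^*_k\}$ (since $d^*_1$ is the largest value in the optimal solution), and so replacing $G^*_k$ (the subgraph of smallest density in the optimal solution) by $G_1$ would yield a collection of $k$ pairwise distinct subgraphs of strictly larger total density, contradicting the optimality of $G^*_1,\ldots,G^*_k$. Hence $density(G_1) \leq d^*_1$, and combining with the first inequality gives the claim. There is no real obstacle here; the only thing to be slightly careful about is to justify the swap in the contradiction argument, namely that $G_1$ is distinct from all of $G^*_1,\ldots,G^*_{k-1}$, which follows because $G_1$ has strictly greater density than any of them under the contradiction hypothesis.
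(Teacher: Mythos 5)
Your proof is correct and follows essentially the same route as the paper: the paper simply observes that $G^*_1$ is a densest subgraph of $G$ while $G_1$ is the output of Goldberg's algorithm, and concludes the equality immediately. Your exchange argument (swapping $G^*_k$ for $G_1$ to contradict optimality) merely makes explicit the justification, which the paper leaves implicit, for why $d^*_1$ cannot be smaller than the maximum density of a subgraph of $G$.
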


For the remaining graphs, our analysis splits into three cases depending on the type of graph previously computed by the algorithm. 

\begin{lemma}
\label{lem:small}%
If $G_i$ is small, for $i < k$, then $density(G_{i+1}) = d^*_{i+1}$.
\end{lemma}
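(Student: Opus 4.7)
The plan is to establish $density(G_{i+1}) = d^*_{i+1}$ by proving matching upper and lower bounds. Setup: by Lemma~\ref{lem:first} and by applying Lemma~\ref{lem:small} inductively to the preceding iterations (all of $G_1,\ldots,G_i$ are small, since otherwise the algorithm would have entered the big/huge branch earlier), I may assume $density(G_j) = d^*_j$ for every $j \leq i$. Since the optimum of \kDenseSub{} is attained by the top-$k$ pairwise distinct subgraphs ranked by density, I can moreover fix an optimal solution $G^*_1,\ldots,G^*_k$ with $G^*_j = G_j$ for $j \leq i$, and $density(G^*_j) = d^*_j$ for $j > i$.

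For the lower bound, I will exhibit a candidate $H$ in the algorithm's candidate pool with $density(H) \geq d^*_{i+1}$, by a case analysis on how the vertex set $V^*_{i+1}$ of $G^*_{i+1}$ relates to $V_1,\ldots,V_i$. In \emph{Case A}, where there exists $j \leq i$ with $V_j \subsetneq V^*_{i+1}$, the subgraph $G^*_{i+1}$ is itself a strict supergraph of $G_j$, so the densest strict supergraph of $G_j$ (a candidate of the second type) has density at least $d^*_{i+1}$. In \emph{Case B}, distinctness together with the negation of Case A forces $V_j \not\subseteq V^*_{i+1}$ for every $j \leq i$, so I can select $v_j \in V_j \setminus V^*_{i+1}$ for each $j$; then $V^*_{i+1} \subseteq V \setminus \{v_1,\ldots,v_i\}$, so $G^*_{i+1}$ is a subgraph of $G[V \setminus \{v_1,\ldots,v_i\}]$ and the densest subgraph of this induced subgraph (a candidate of the first type) has density at least $d^*_{i+1}$.

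For the upper bound, I argue that $G_{i+1}$ is distinct from each of $G_1,\ldots,G_i$: a first-type candidate omits some $v_j \in V_j$ and so cannot coincide with $G_j$, while any second-type candidate equal to an already-chosen $G_l$ is explicitly excluded by the algorithm. Therefore $G_1,\ldots,G_{i+1}$ are $i+1$ pairwise distinct subgraphs of $G$, and the sum of their densities is bounded by the sum of the top-$(i+1)$ densities of distinct subgraphs of $G$, namely $d^*_1 + \cdots + d^*_{i+1}$. Combined with $density(G_j) = d^*_j$ for $j \leq i$, this gives $density(G_{i+1}) \leq d^*_{i+1}$, and together with the lower bound yields the equality.

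The main obstacle I anticipate is the corner within Case~A where the densest strict supergraph of the chosen $G_j$ happens to equal some previously-selected $G_l$ and is therefore excluded from the candidate pool. I expect this to be dealt with by choosing $j$ so that $V_j$ is \emph{maximal} (with respect to containment) among those $V_j \subsetneq V^*_{i+1}$: then any such $G_l$ would satisfy $V_l \supsetneq V_j$ with $V_l \not\subseteq V^*_{i+1}$, giving a vertex $v_l \in V_l \setminus V^*_{i+1}$ that can be folded into a Case-B-style first-type candidate still dominating $d^*_{i+1}$. This case distinction should make the lower bound watertight irrespective of which type-(b) computations get excluded.
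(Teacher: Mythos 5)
Your core argument is the same as the paper's: the paper's proof is exactly your Case~A/Case~B dichotomy (stated in one sentence, with the upper bound left implicit), and that part of your write-up is fine. The genuine problem is the corner case you yourself flag, and your proposed repair does not work. A first-type candidate is the densest subgraph of $G[V\setminus\{v_1,\ldots,v_i\}]$ where a vertex $v_t$ must be deleted from \emph{every} $V_t$, $t\le i$ --- including $V_j$ itself. In Case~A we have $V_j\subsetneq V^*_{i+1}$, so \emph{every} vertex of $V_j$ lies inside $V^*_{i+1}$; hence no matter which $v_l\in V_l\setminus V^*_{i+1}$ your maximality argument supplies for the other indices, the ambient graph $G[V\setminus\{v_1,\ldots,v_i\}]$ never contains $G^*_{i+1}$, and you cannot conclude that its densest subgraph has density at least $d^*_{i+1}$. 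The maximal choice of $j$ gives you a vertex of $V_l$ outside $V^*_{i+1}$, but that is useless for an argument whose only engine is ``$G^*_{i+1}$ survives inside the candidate's vertex set.''

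Moreover, this corner case cannot be argued away, because it actually occurs: let $A$ be a clique on $11$ vertices, let $b$ have exactly $4$ neighbours in $A$, let $c$ have exactly $3$ neighbours in $A$, with $b\not\sim c$, and pad with enough isolated vertices (take $\varepsilon=13$, so graphs on at most $12$ vertices are small and inequality~(\ref{eqn:sufficientlylarge}) holds). Then $G_1=G[A]$ (density $5$), the \emph{unique} densest strict supergraph of $G_1$ is $G[A\cup\{b\}]$ (density $59/12$), so the algorithm sets $G_2=G[A\cup\{b\}]$. At step $i=2$ the optimal third subgraph is $G^*_3=G[A\cup\{c\}]$ with $d^*_3=58/12$, but the type-2 candidate from $G_1$ equals $G_2$ and is excluded, the type-2 candidate from $G_2$ is $G[A\cup\{b,c\}]$ of density $62/13<58/12$, and every type-1 candidate deletes a vertex of $A$, so its density is at most $45/10=4.5<58/12$. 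Thus the algorithm's $G_3$ has density strictly below $d^*_3$, i.e.\ the exact equality claimed by the lemma fails for the algorithm as described. So the gap is essential: the paper's own two-line proof silently suffers from the same defect (it asserts that the two candidate types ``cover'' both cases without accounting for excluded type-2 candidates), and any correct treatment must either change the algorithm --- e.g.\ compute a densest strict supergraph of $G_j$ \emph{distinct from} $G_1,\ldots,G_i$ --- or settle for an approximate rather than exact guarantee in this case; no choice of $j$ within your framework can close it.
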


\begin{proof}
The optimal subgraph $G^*_{i+1}=(V^*_{i+1},E^*_{i+1})$ is either a supergraph of some graph in $G_1,\ldots,G_i$, or $V_j \setminus V^*_{i+1} \neq \emptyset$ for each $j \in \{1,\ldots,i\}$. Since the candidates for $G_{i+1}$ considered by our algorithm in case $G_i$ is small cover both these cases, the lemma follows. 
\end{proof}

Note that Lemma~\ref{lem:first} and Lemma~\ref{lem:small} together imply that if all subgraphs computed by the algorithm are small, then $density(G_i)=d^*_i$ for each $i \in \{1,\ldots,k\}$, and our algorithm computes an optimal solution. Furthermore, the first big or huge subgraph it computes also has optimal densities. The next two lemmas deal with the remaining subgraphs that are computed after computing a big or huge subgraph.  

\begin{lemma}
\label{lem:big}
Suppose $G_i$, for $i<k$, is the first big subgraph computed by the algorithm. Then $density(G_j) \geq (1-\frac{1}{\varepsilon}) \cdot d^*_j$ for each $j \in \{i+1,\ldots,k\}$.
\end{lemma}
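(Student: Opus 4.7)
The plan is to carry out the proof in two main steps: first identify the exact density of $G_i$, then lower-bound each subsequent $density(G_j)$ in terms of $density(G_i)$ via the ``one-vertex extension'' structure of the algorithm in the big case.

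For the first step I would use the hypothesis that $G_i$ is the first big subgraph, which forces $G_1,\ldots,G_{i-1}$ to all be small. If $i=1$ then Lemma~\ref{lem:first} gives $density(G_1)=d^*_1$ immediately; otherwise $G_{i-1}$ is small and Lemma~\ref{lem:small} applied at index $i-1$ yields $density(G_i)=d^*_i$. Either way, $density(G_i)=d^*_i\geq d^*_j$ for every $j\geq i$, since the optimal densities are sorted in non-increasing order.

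For the second step, recall that when $G_i$ is big the algorithm sets $G_j=G[V_i\cup\{v_j\}]$ for each $j\in\{i+1,\ldots,k\}$, where the $v_j$ are distinct vertices of $V\setminus V_i$. Since $G_j$ has $|V_i|+1$ vertices and at least $|E_i|$ edges, it satisfies
\[
density(G_j)\;\geq\;\frac{|E_i|}{|V_i|+1}\;=\;density(G_i)\cdot\frac{|V_i|}{|V_i|+1}\;=\;density(G_i)\cdot\Bigl(1-\frac{1}{|V_i|+1}\Bigr).
\]
Because $G_i$ is not small, $|V_i|>\varepsilon-1$, hence $|V_i|+1>\varepsilon$ and $\frac{1}{|V_i|+1}<\frac{1}{\varepsilon}$. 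Chaining with the first step then gives
\[
density(G_j)\;\geq\;d^*_i\cdot\Bigl(1-\frac{1}{\varepsilon}\Bigr)\;\geq\;d^*_j\cdot\Bigl(1-\frac{1}{\varepsilon}\Bigr),
\]
as required. I do not expect any real obstacle: the only subtlety is that the factor $1-\frac{1}{\varepsilon}$ is obtained exactly by inverting the smallness threshold $|V_i|\leq\varepsilon-1$ into $|V_i|+1>\varepsilon$, and the rest is routine.
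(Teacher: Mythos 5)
Your proof is correct and follows essentially the same route as the paper's: it uses Lemmas~\ref{lem:first} and~\ref{lem:small} to establish $density(G_i)=d^*_i$, then bounds $density(G_j)\geq |E_i|/(|V_i|+1)$ and converts the bigness condition $|V_i|>\varepsilon-1$ into the factor $(1-\frac{1}{\varepsilon})$. The only difference is cosmetic algebra (you factor out $1-\frac{1}{|V_i|+1}$ where the paper rewrites the denominator as $n_i+n_i/(\varepsilon-1)$), and your explicit justification of $density(G_i)=d^*_i$ via the ``first big subgraph'' hypothesis is a welcome clarification of a step the paper leaves terse.
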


\begin{proof}
Let $n_i = |V_i|$ and $m_i=|E_i|$. By Lemma~\ref{lem:first} and Lemma~\ref{lem:small} we know that $m_i/n_i=d^*_i$. Furthermore, as $G_i$ is big, we have $n_i > \varepsilon - 1$, or written differently $n_i/(\varepsilon-1) > 1$. Now, as each $G_j$ has $n_i+1$ vertices and at least $m_i$ edges, we have
$$
density(G_j) \geq \frac{m_i}{n_i+1} > \frac{m_i}{n_i+n_i/(\varepsilon-1)} = \frac{\varepsilon-1}{\varepsilon} \cdot \frac{m_i}{n_i} = (1-\frac{1}{\varepsilon}) \cdot d^*_i \geq 
(1-\frac{1}{\varepsilon}) \cdot d^*_j.
$$
\end{proof}

\begin{lemma}
\label{lem:huge}%
Suppose $G_i$, for $i<k$, is the first huge subgraph computed by the algorithm. Then $density(G_j) \geq (1-\frac{1}{\varepsilon}) \cdot d^*_j$ for each $j \in \{i+1,\ldots,k\}$.
\end{lemma}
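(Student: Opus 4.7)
My plan is to mirror the structure of the proof of Lemma~\ref{lem:big}, the new ingredient being a bound on how much density can drop when a single minimum-degree vertex is deleted, since here $G_{i+1},\ldots,G_k$ are produced by iterating such deletions on $G_i$ rather than by appending a fresh vertex. The first step is to combine Lemma~\ref{lem:first} with iterated applications of Lemma~\ref{lem:small} to conclude that $density(G_i)=d^*_i$; in particular $d^*_i \geq d^*_j$ for every $j \in \{i+1,\ldots,k\}$.

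The key technical step is a one-step decay bound: if $H$ has $n_H$ vertices, $m_H$ edges and density $d_H = m_H/n_H$, then its minimum degree is at most the average degree $2d_H$, so after deleting a min-degree vertex the density is at least $\frac{m_H-2d_H}{n_H-1} = d_H \cdot \frac{n_H-2}{n_H-1}$. I would apply this bound inductively to the $j-i$ deletions taking $G_i$ to $G_j$. The resulting product is telescoping, with all intermediate factors cancelling, and collapses to
\[
density(G_j) \;\geq\; d^*_i \cdot \frac{n_i - (j-i) - 1}{n_i - 1}.
\]

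To finish, I would use the ``huge'' hypothesis $n_i \geq n-k-i+1$ together with $j-i \leq k-i$ to see that the numerator is at least $n-2k$ while the denominator is at most $n-1 < n$, yielding $density(G_j) \geq d^*_j \cdot (n-2k)/n$. The remaining inequality $(n-2k)/n \geq 1 - 1/\varepsilon$ follows directly from the sufficient-largeness assumption Eq.~\ref{eqn:sufficientlylarge}: for $x \in [0,1]$ and $k\geq 1$ one has $x \geq x^k$, so $(n-2k)/n \geq \bigl((n-2k)/n\bigr)^k \geq 1-1/\varepsilon$. A short closing remark verifies distinctness, since each $G_j$ produced in this phase has more than $n - 2k > \varepsilon - 1$ vertices (so it differs from every small $G_1,\ldots,G_{i-1}$), and the graphs $G_i,G_{i+1},\ldots,G_k$ pairwise differ simply by vertex count.

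I do not foresee a serious obstacle. The only point worth flagging is that the per-step factor $(n_H-2)/(n_H-1)$ telescopes across consecutive deletions, so the cumulative loss is proportional to $(k-i)/n$ rather than compounding into a $k$-th power; this is precisely why the apparently strong-looking Eq.~\ref{eqn:sufficientlylarge} (which tolerates a $k$-th power loss) is comfortably sufficient here.
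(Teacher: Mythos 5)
Your proof is correct and takes essentially the same route as the paper: both rest on the one-step bound that deleting a minimum-degree vertex from a graph with $n_H$ vertices and density $d_H$ leaves density at least $d_H \cdot \frac{n_H-2}{n_H-1}$, iterated over the deletions and closed off by Equation~\ref{eqn:sufficientlylarge}. The only (harmless, in fact slightly sharper) difference is bookkeeping: you telescope the product exactly to get the linear-loss factor $\frac{n_i-(j-i)-1}{n_i-1} \geq \frac{n-2k}{n}$ and then use $x \geq x^k$ on $[0,1]$, whereas the paper bounds each factor crudely by $\frac{n-k-j-1}{n}$ and invokes Equation~\ref{eqn:sufficientlylarge} on the resulting $j$-th power $\left(\frac{n-k-j-1}{n}\right)^j \cdot d^*_i$.
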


\begin{proof}
Let $n_i = |V_i|$ and $m_i=|E_i|$. Since $G_i$ is huge we know that $n_i > n-k$, and again by Lemmas~\ref{lem:first} and~\ref{lem:small} we know that $m_i/n_i=d^*_i$. Let $v \in V_i$ be a vertex of minimum degree in $G_i$. Consider the subgraph $G_{i+1}$, constructed from $G_i$ by removing the vertex $v \in V_i$ with minimum degree. Then the degree of $v$ cannot exceed the average degree in $G_i$, and so $deg(v) \leq 2m_i/n_i$. Thus, the density of $G_i$ can be bounded by: 
$$
density(G_{i+1})= \frac{m_i-deg(v)}{n_i-1} \geq \frac{m_i-2m_i/n_i}{n_i-1} = 
\frac{n_i-2}{n_i-1} \cdot  d^*_i > 
\frac{n-k-2}{n} \cdot  d^*_i.
$$
Extending this argument, it can be seen that the density of $G_{i+j}$, for any $j \in \{1,\ldots,k-i\}$, is bounded from below by $\left(\frac{n-k-j-1}{n}\right)^j \cdot  d^*_i$. The lemma then directly follows from Equation~\ref{eqn:sufficientlylarge}. 
\end{proof}

Summarizing, due to Lemmas~\ref{lem:first}, \ref{lem:small}, \ref{lem:big}, and \ref{lem:huge}, we know that $density(G_i) \geq (1-\frac{1}{\varepsilon}) \cdot d^*_i$ for all $i \in\{1,\ldots,k\}$, and so in total we have:
$
\sum^k_{i=1} density(G_i) \geq \sum^k_{i=1} (1-\frac{1}{\varepsilon}) \cdot d^*_i = (1-\frac{1}{\varepsilon}) \cdot OPT. 
$
This completes the proof of Theorem~\ref{thm:distinct}. 

\section{\kDenseSub{} in FPT Time}
\label{sec:FPT}

We next show that \kDenseSub{} is solvable
in $O(2^k m n^3 \log n)$ time, \emph{i.e.} that it is fixed-parameter tractable in $k$. Recall that our goal is to compute $k$ subgraphs $G_1,\ldots,G_k$ of $G=(V,E)$ whose total density is maximal, and our only constraint is that these subgraphs need to be distinct.  

Similarly to Section~\ref{sec:PTAS}, our approach here is to iteratively compute $G_1$, then $G_2$, and so forth, where we start from a densest subgraph $G_1$ of $G$. In what follows, we assume we have already computed the subgraphs $G_1=(V_1,E_1),\ldots,G_\ell=(V_\ell,E_\ell)$, for $\ell \in \{1,\ldots,k-1\}$, and our goal is to compute a densest subgraph $G_{\ell+1}=(V_{\ell+1},E_{\ell+1})$ among all subgraphs in $G$ distinct from $G_1,\ldots,G_\ell$. Let $V^*=\bigcup^\ell_{i=1} V_i$. We consider the following two cases: 
\begin{enumerate}
\item There is some vertex $v \in V_{\ell+1}$ that is not in $V^*$, \emph{i.e.} $V_{\ell+1} \nsubseteq V^*$. 
\item $V_{\ell+1}$ is contained completely in $V^*$, \emph{i.e.} $V_{\ell+1} \subseteq V^*$. 
\end{enumerate}
We compute a densest subgraph in each one of these cases, and then take the densest of the two to be $G_{\ell+1}$. 

\subsection{First case}

The first case where $V_{\ell+1} \nsubseteq V^*$ is easy: we iterate through all vertices $v \in V \setminus V^*$ and compute a densest subgraph 
of $G$ that includes $v$, and then take the densest of all these subgraphs (each of them being distinct from $G_1,\ldots,G_\ell$).

\begin{lemma}
\label{lem:includes_v}%
Let $v \in V$. A densest subgraph of $G$ that includes $v$ can be computed in $O(m n \log n)$ time.
\end{lemma}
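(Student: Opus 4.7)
The plan is to prove the lemma via a minor modification of Goldberg's flow-based algorithm for \DenseSub{}. Recall that Goldberg's algorithm binary-searches over candidate density values $g$, and for each $g$ solves a single minimum $s$-$t$ cut on a flow network whose vertices are $V \cup \{s, t\}$: source $s$ has an arc of capacity $m$ to every vertex, every vertex $u$ has an arc of capacity $m + 2g - deg(u)$ to the sink $t$, and each edge $\{u, u'\}$ of $G$ is replaced by a pair of oppositely oriented arcs of unit capacity. The source side of any min-cut, restricted to $V$, is either empty or is a subgraph of density at least $g$, which is exactly the binary-search criterion.

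To force the fixed vertex $v$ into the computed subgraph, I would simply replace the capacity of the arc $v \to t$ by $+\infty$ (or any value exceeding $nm$, which trivially upper-bounds the cost of any finite cut). Under this change, every cut that places $v$ on the $t$-side has infinite value, so the min-cut necessarily puts $v$ on the source side, and its source-side restricted to $V$ always contains $v$. Standard arguments then show that this source-side is a densest subgraph of $G$ subject to the constraint $v \in S$; moreover the candidate densities under the constraint are still rationals with numerator at most $m$ and denominator at most $n$, so successive feasible densities are separated by at least $1/n^2$, and the same $O(\log n)$ binary-search bound on the number of iterations applies.

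Each iteration reduces to a single max-flow computation on a graph of $O(n)$ vertices and $O(m)$ edges, which Orlin's algorithm~\cite{Orlin13} solves in $O(mn)$ time, giving the claimed $O(mn \log n)$ overall bound. The only subtlety I foresee is verifying that the relationship between min-cut value and the density of the induced subset, which drives Goldberg's correctness proof, is unaffected by raising one capacity to $+\infty$; this holds because the algebra is local to each cut and the sole effect of the change is to exclude cuts that separate $v$ from $s$, so the analysis carries over verbatim.
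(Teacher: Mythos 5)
Your overall strategy---run Goldberg's parametric min-cut search with one capacity modified so that $v$ is pinned to the source side---is viable, but the modification you chose does the exact opposite of what you claim. In Goldberg's network (under your own conventions), the arc $v \to t$ crosses the cut precisely when $v$ lies on the \emph{source} side, since $t$ always lies on the sink side; when $v$ lies on the $t$-side, the arc $v \to t$ has both endpoints in the sink part and is not cut at all (what gets cut instead is $s \to v$, at finite cost $m$). Hence setting the capacity of $v \to t$ to $+\infty$ makes every cut that puts $v$ on the source side infinite, so every finite cut, and in particular the min-cut, places $v$ on the sink side. Your algorithm as written therefore computes a densest subgraph among those \emph{excluding} $v$, and the sentence ``every cut that places $v$ on the $t$-side has infinite value, so the min-cut necessarily puts $v$ on the source side'' is false.

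The repair is immediate: put the infinite capacity on the arc $s \to v$ instead (equivalently, contract $v$ into $s$). Then every cut separating $v$ from $s$ costs $+\infty$, every finite cut has $v$ on the source side, and the usual identity (cut value $= mn + 2|S|\left(g - |E(S)|/|S|\right)$ for source side $S$) shows that the min-cut is below $mn$ iff some $S$ containing $v$ has density exceeding $g$; the rest of your argument (the $1/n^2$ separation of candidate densities, the $O(\log n)$ binary-search iterations, one $O(mn)$ max-flow per iteration) then goes through and yields the claimed $O(mn \log n)$ bound. Note that, with this fix, your route genuinely differs from the paper's: the paper instead makes a single call to Goldberg's algorithm on a \emph{vertex-weighted} instance, assigning $v$ weight $n^2$ and every other vertex weight $1$, so that any subgraph containing $v$ has weighted density at least $n$ while any subgraph avoiding $v$ has weighted density below $n$. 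Your cut-based constraint is arguably the more robust of the two, since it restricts the feasible region without perturbing the objective; in the paper's weighting, the bonus $n^2/|V_1|$ shrinks as the subgraph grows, which biases the weighted objective toward small subgraphs containing $v$ and requires care beyond what the paper's one-line justification provides.
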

\begin{proof}
Let $w_v:V \to \mathbb{N}$ be the weight function defined by $w_v(v)=n^2$, and $w_v(u)=1$ for all vertices $u \neq v$. Then any subgraph of $G$ that does not include $v$ has weighted density less than $n$, and any subgraph that includes $v$ has weight density at least $n$. It follows that computing a densest subgraph of $G$ that includes $v$ can be done by a single application of Goldberg's algorithm in 
$O(m n \log n)$ time on $G$ weighted by $w_v$. 
\end{proof}

\begin{lemma}
\label{lem:case1}%
If $V_{\ell+1} \nsubseteq V^*$ then $G_{\ell+1}$ can be computed in $O(mn^2 \log n)$ time.
\end{lemma}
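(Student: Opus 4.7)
The plan is to follow exactly the strategy already sketched in the paragraph preceding the lemma: enumerate every vertex $v \in V \setminus V^*$, invoke the subroutine of Lemma~\ref{lem:includes_v} on each such $v$ to obtain a densest subgraph $H_v$ of $G$ that contains $v$, and finally return the densest graph among all the $H_v$.

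For correctness, I would argue as follows. Suppose an optimal choice of $G_{\ell+1}$ (that is, a densest subgraph of $G$ distinct from $G_1,\ldots,G_\ell$ subject to $V_{\ell+1} \nsubseteq V^*$) exists; then by the case hypothesis there is some vertex $v^\star \in V_{\ell+1} \setminus V^*$, and $G_{\ell+1}$ is by definition at most as dense as the densest subgraph containing $v^\star$, namely $H_{v^\star}$. Conversely, every $H_v$ produced by the algorithm contains a vertex $v \notin V^*$, so $H_v \neq G_i$ for all $i \leq \ell$ since $V_i \subseteq V^*$; hence $H_v$ is a legitimate candidate for $G_{\ell+1}$. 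Taking the densest $H_v$ therefore yields an optimum in this case.

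For the running time, observe that $|V \setminus V^*| \leq n$, and each call to the subroutine of Lemma~\ref{lem:includes_v} runs in $O(mn \log n)$ time. The total is $O(mn^2 \log n)$, dominating the trivial $O(n)$ cost of the final comparison.

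The main obstacle is essentially bookkeeping rather than a genuine difficulty: one must be a bit careful to notice that the distinctness constraint is automatically satisfied precisely because any subgraph containing a vertex outside $V^*$ cannot equal any of the previously computed $G_i$. Everything else is an immediate application of Lemma~\ref{lem:includes_v}.
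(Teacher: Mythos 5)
Your proposal is correct and follows exactly the approach the paper intends: enumerate $v \in V \setminus V^*$, apply Lemma~\ref{lem:includes_v} to each (for $O(n)$ calls at $O(mn\log n)$ each), and observe that any subgraph containing a vertex outside $V^*$ is automatically distinct from $G_1,\ldots,G_\ell$ while the optimum in this case contains such a vertex. Nothing is missing.
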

\begin{proof}
Iterate on all $O(n)$ vertices $v \in V \setminus V^*$, and run the algorithm in Lemma~\ref{lem:includes_v} for each such vertex $v$. In total, by Lemma~\ref{lem:includes_v} this takes $O(n) \cdot O(mn \log n) = O(mn^2 \log n)$ time.
\end{proof}

\subsection{Second case}

The second case where $V_{\ell+1} \subseteq V^*$ requires more details. We say that a non-empty subset $\mathcal{C} \subseteq \{V_1,\ldots,V_\ell\}$ \emph{covers} $V_{\ell+1}$ if $V_{\ell+1} \subseteq V_\mathcal{C}=\bigcup_{V_i \in \mathcal{C}} V_i$, and it is a \emph{minimal cover} if $V_{\ell+1} \nsubseteq V_\mathcal{C'}$ for any proper subset $\mathcal{C'} \subset \mathcal{C}$. Our approach is to compute for each non-empty subset $\mathcal{C} \subseteq \{V_1,\ldots,V_\ell\}$, a densest subgraph of $G$ for which $\mathcal{C}$ is a minimal cover.

\begin{lemma}
\label{lem:vinvout}
Let $\mathcal{C}\subseteq\{V_1,\ldots,V_\ell\}$, and suppose that $\mathcal{C}$ is a minimal cover of~$V_{\ell+1}$. If $V_{\ell+1} \neq V_\mathcal{C}$, then there are two vertices $v_{in}, v_{out} \in V_\mathcal{C}$ such that $v_{in} \in V_{\ell+1}$ and $v_{out} \notin V_{\ell+1}$, and there is no subset $V_i \in \mathcal{C}$ with $v_{in} \in V_i$ and $v_{out} \notin V_i$.
\end{lemma}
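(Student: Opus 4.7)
The plan is to use minimality of $\mathcal{C}$ to extract $v_{in}$ as a ``witness'' vertex whose only host in $\mathcal{C}$ is carefully chosen to contain $v_{out}$ as well. The crux is the order in which we make the choices: pick $v_{out}$ first, then let $V_{i_0}$ be a set in $\mathcal{C}$ that contains it, and only then invoke minimality to select $v_{in}$.

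First I would obtain $v_{out}$. Since the hypothesis $V_{\ell+1} \neq V_\mathcal{C}$ combined with $V_{\ell+1} \subseteq V_\mathcal{C}$ forces $V_\mathcal{C} \setminus V_{\ell+1} \neq \emptyset$, any vertex in this difference can serve as $v_{out}$. Fix one such $v_{out}$, and let $V_{i_0} \in \mathcal{C}$ be some set in the cover that contains $v_{out}$ (which exists because $v_{out} \in V_\mathcal{C}$).

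Next I would extract $v_{in}$ from the minimality hypothesis applied to $V_{i_0}$. Since $\mathcal{C}$ is a minimal cover, the proper subset $\mathcal{C}' = \mathcal{C} \setminus \{V_{i_0}\}$ fails to cover $V_{\ell+1}$, so there exists $v_{in} \in V_{\ell+1}$ with $v_{in} \notin V_j$ for every $V_j \in \mathcal{C}'$. On the other hand, $v_{in} \in V_{\ell+1} \subseteq V_\mathcal{C}$ means $v_{in}$ must lie in \emph{some} set of $\mathcal{C}$, and the only remaining possibility is $v_{in} \in V_{i_0}$. In particular $v_{in} \in V_\mathcal{C}$, as required.

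Finally I would verify the ``no separating $V_i$'' condition. Suppose $V_i \in \mathcal{C}$ satisfies $v_{in} \in V_i$; by the previous paragraph, the unique such set is $V_{i_0}$, and by construction $v_{out} \in V_{i_0}$, so the condition $v_{out} \notin V_i$ fails. Thus no $V_i \in \mathcal{C}$ contains $v_{in}$ while excluding $v_{out}$, which completes the argument. I do not expect any significant obstacle here: the entire lemma reduces to the observation that minimality produces private witnesses for each $V_i \in \mathcal{C}$, and choosing $v_{out}$ before invoking minimality ensures the witness we obtain lies only in a set that also hosts $v_{out}$.
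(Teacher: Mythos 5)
Your proof is correct and follows essentially the same route as the paper's: both arguments fix a vertex $v_{out} \in V_\mathcal{C} \setminus V_{\ell+1}$ and then invoke minimality of the cover $\mathcal{C}$ on a proper sub-collection to extract $v_{in}$. The only difference is the sub-collection chosen --- you delete a single set $V_{i_0}$ containing $v_{out}$, making $v_{in}$ private to $V_{i_0}$, whereas the paper keeps exactly the sets avoiding $v_{out}$ (handling the case where that collection is empty separately); both choices immediately yield the required conclusion.
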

\begin{proof}
Suppose that $V_{\ell+1} \neq V_\mathcal{C}$, and so $V_{\ell+1} \subset V_\mathcal{C}$. It follows that there exists a vertex $v_{out} \in V_\mathcal{C} \setminus V_{\ell+1}$. Consider the subset $\mathcal{C'} \subset \mathcal{C}$ which includes all vertex subsets in $\mathcal{C}$ that do not include the vertex $v_{out}$, \emph{i.e.} $\mathcal{C'}=\{V_i \in \mathcal{C} : v_{out} \notin V_i\}$. Note that  $\mathcal{C'}$ is indeed a proper subset of     $\mathcal{C}$, as $v_{out}$ belongs to some graph in $\mathcal{C}$. If $\mathcal{C'} = \emptyset$, then $v_{out}$ belongs to every subset $V_i \in \mathcal{C}$, and the lemma holds. If $\mathcal{C'} \neq \emptyset$, there must be some vertex $v_{in} \in V_{\ell+1} \setminus V_{\mathcal{C'}}$ by the minimality of $\mathcal{C}$, since otherwise $\mathcal{C'}$ would cover $V_{\ell+1}$. 
\end{proof}

\begin{lemma}
\label{lem:case2}%
If $V_{\ell+1} \subseteq V^*$ then $G_\ell$ can be computed in $O(2^k m n^3 \log n)$ time.
\end{lemma}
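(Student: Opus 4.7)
The plan is to handle Case 2 by enumerating every non-empty subset $\mathcal{C}\subseteq\{V_1,\ldots,V_\ell\}$, of which there are at most $2^k-1$, and for each $\mathcal{C}$ producing a short collection of candidate subgraphs; at the end I return as $G_{\ell+1}$ the densest candidate overall that is distinct from $V_1,\ldots,V_\ell$. Together with Lemma~\ref{lem:case1} for Case 1, this determines $G_{\ell+1}$.

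For a fixed $\mathcal{C}$, the candidate list has two parts. First, $G[V_\mathcal{C}]$ itself, which handles the sub-case $V_{\ell+1}=V_\mathcal{C}$. Second, for every ordered pair $(v_{in},v_{out})\in V_\mathcal{C}\times V_\mathcal{C}$ with $v_{in}\neq v_{out}$ that satisfies the condition of Lemma~\ref{lem:vinvout}---no $V_i\in\mathcal{C}$ contains $v_{in}$ while missing $v_{out}$---I would apply Lemma~\ref{lem:includes_v} to $G[V_\mathcal{C}\setminus\{v_{out}\}]$ to obtain a densest subgraph containing $v_{in}$. This yields at most $1+n^2$ candidates per subset and $O(2^k\cdot n^2)$ candidates in total.

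Correctness follows by fixing a minimal cover $\mathcal{C}^*$ of the optimal $V_{\ell+1}$: if $V_{\ell+1}=V_{\mathcal{C}^*}$, then $V_{\ell+1}$ appears as a first-type candidate directly; otherwise Lemma~\ref{lem:vinvout} supplies a pair $(v_{in}^*,v_{out}^*)$ for which $V_{\ell+1}$ itself is a feasible input to the corresponding Lemma~\ref{lem:includes_v} call, so the returned candidate has density at least $density(V_{\ell+1})$. The running-time bound follows immediately: at most $2^k\cdot n^2$ applications of Lemma~\ref{lem:includes_v} at $O(mn\log n)$ each yield the claimed $O(2^k mn^3\log n)$ total.

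The main obstacle I anticipate lies in the final distinctness check. The Lemma~\ref{lem:vinvout} condition itself rules out any $V_i\in\mathcal{C}^*$ as the output of the Lemma~\ref{lem:includes_v} call for the optimal triple, since such a $V_i$ would contain $v_{in}^*$ yet miss $v_{out}^*$, contradicting the condition; for $V_i\notin\mathcal{C}^*$ one appeals to the optimality of $V_{\ell+1}$ among subgraphs of $V^*$ distinct from $V_1,\ldots,V_\ell$, which caps the density of any distinct candidate at $density(V_{\ell+1})$, so the densest distinct candidate on my list matches $V_{\ell+1}$ in density.
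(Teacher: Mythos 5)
Your algorithm, candidate list, and run-time accounting are exactly those of the paper's proof: enumerate all nonempty $\mathcal{C}\subseteq\{V_1,\ldots,V_\ell\}$, keep $G[V_\mathcal{C}]$ as a candidate, run the algorithm of Lemma~\ref{lem:includes_v} on $G[V_\mathcal{C}\setminus\{v_{out}\}]$ for pairs $(v_{in},v_{out})$, and output the densest candidate distinct from $G_1,\ldots,G_\ell$. The distinctness obstacle you flag at the end is real, and your first observation about it is correct: by the condition of Lemma~\ref{lem:vinvout}, no $V_i\in\mathcal{C}^*$ can be the output of the call for the optimal triple, since any feasible output contains $v_{in}^*$ and omits $v_{out}^*$.

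However, your resolution of the remaining case $V_i\notin\mathcal{C}^*$ is circular, and this is a genuine gap. Correctness requires a \emph{lower} bound: some candidate on the list that is distinct from $G_1,\ldots,G_\ell$ and has density at least $d^*=density(G_{\ell+1})$. The optimality of $G_{\ell+1}$ gives only the \emph{upper} bound (every distinct candidate has density at most $d^*$), so it cannot yield the conclusion that ``the densest distinct candidate matches $V_{\ell+1}$ in density''. The scenario your argument does not exclude is the following: some earlier $V_i\notin\mathcal{C}^*$ satisfies $v_{in}^*\in V_i\subseteq V_{\mathcal{C}^*}\setminus\{v_{out}^*\}$ --- Lemma~\ref{lem:vinvout} constrains only members of $\mathcal{C}^*$, so nothing forbids this --- and is at least as dense as $V_{\ell+1}$ (earlier graphs always are). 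Then the call for the optimal triple may return exactly this $V_i$; your filter discards it, and no surviving candidate of density at least $d^*$ has been exhibited. To be fair, the paper's own proof is equally terse at this point: it simply asserts that $G_{\ell+1}$ \emph{is} a densest subgraph of $G[V_\mathcal{C}\setminus\{v_{out}\}]$ containing $v_{in}$, which is open to the same objection. The gap can be closed without changing the algorithm: choose the cover $\mathcal{C}$ so that its union $V_\mathcal{C}$ is inclusion-minimal among unions of covering subfamilies, and then enlarge $\mathcal{C}$ by every $V_i$ with $V_i\subseteq V_\mathcal{C}$ (this does not change $V_\mathcal{C}$). Re-running the proof of Lemma~\ref{lem:vinvout} with ``no covering subfamily has a strictly smaller union'' in place of family-minimality produces a pair $(v_{in},v_{out})$ such that \emph{no} $V_i$, inside or outside $\mathcal{C}$, is a feasible output of the corresponding call: those with $V_i\subseteq V_\mathcal{C}$ are excluded by the pair, and those with $V_i\not\subseteq V_\mathcal{C}$ are not subgraphs of $G[V_\mathcal{C}\setminus\{v_{out}\}]$ at all. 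For that triple the returned subgraph is automatically distinct and of density at least $d^*$, which is precisely the missing lower bound.
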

\begin{proof}
We iterate over all possible $2^\ell-1$ non-empty subsets  $\mathcal{C} \subseteq \{V_1,\ldots,V_\ell\}$. For each subset $\mathcal{C}$, we iterate over all $O(n^2)$ vertices $v_{in}, v_{out} \in V_\mathcal{C}$ and compute a densest subgraph in $G[V_\mathcal{C}\setminus\{v_{out}\}]$ that includes $v_{in}$ (using the algorithm in Lemma~\ref{lem:includes_v}). This requires $O(m n^3 \log n)$ time in total. Out of all subgraphs computed this way, along with all subgraphs of the form $G[V_{\mathcal{C}}]$, we choose the densest subgraph which is distinct from $\{G_1,\ldots,G_\ell\}$.  As $G_{\ell +1}$ is a densest subgraph in $G[V_\mathcal{C}\setminus\{v_{out}\}]$ that includes $v_{in}$, for the minimal cover $\mathcal{C}$ of $V_{\ell+1}$ and some $v_{in},v_{out} \in V_\mathcal{C}$ (according to Lemma~\ref{lem:vinvout}), this algorithm is indeed guaranteed to find a subgraph of $G$ with density at least $density(G_{\ell+1})$. 
\end{proof}

\subsection{Summary}

Thus, taking the densest of the subgraph given by Lemma~\ref{lem:case1} and the subgraph given by Lemma~\ref{lem:case2} gives us a densest subgraph in $G$ which is distinct from $\{G_1,\ldots,G_\ell\}$ in $O(2^k m n^3 \log n)$ time. In this way, we can compute $k$ densest distinct subgraphs of $G$ in $O(2^k k m n^3 \log n)$ time, completing the proof of Theorem~\ref{thm:distinctFPT}.

\section{Conclusion}
\label{sec:conclusion}

This paper studies a natural variant for computing $k$ densest subgraphs of a given graph, a central problem in graph data mining. 
We show that the problem is fixed-parameter tractable with respect to $k$, and admits a PTAS for $k=O(1)$. 

From a theoretical perspective, the most interesting problem that is left open by our paper is whether 
\kDenseSub{} is NP-hard for unbounded $k$. However, we feel that for most practical settings, the number~$k$ of solution subgraphs should be significantly smaller than the size~$n$ of the network. Thus, we feel that examining the problem 
on specific social network models might be more interesting from a practical point of view.  
Finally, we have considered unweighted graphs, 
a natural direction is whether it is possible
to extend the results to edge-weighted graphs.

\section*{Acknowledgements}

We thank an anonymous reviewer for pointing out an error in an algorithm included in a previous version of the paper.

\bibliography{biblio1}

\end{document}